\documentclass[11pt, twocolumn, copyright, gdm]{google}

\usepackage[authoryear, sort&compress, round]{natbib}
\keywords{test-time optimization, retrieval system, embeddings, ranking rewards, knowledge distillation}

\uselogo{}

\usepackage{natbib}
\usepackage{amsmath, amssymb, amsfonts}
\usepackage{algorithmic}
\usepackage[utf8]{inputenc}
\usepackage[T1]{fontenc}
\usepackage{booktabs}
\usepackage{amsfonts}
\usepackage{nicefrac}
\usepackage{microtype}
\usepackage{enumitem}
\usepackage{graphicx}
\usepackage{gensymb}
\usepackage{caption}
\usepackage{csquotes}
\usepackage{subcaption}
\usepackage{amsthm}
\usepackage{mathrsfs}
\usepackage{amsmath}
\usepackage{amssymb}
\usepackage{mathtools}
\usepackage{soul}
\usepackage[textwidth=1.2in,textsize=tiny]{todonotes}
\usepackage{diagbox}
\usepackage{algorithm}
\usepackage{hhline}
\usepackage{multirow}
\usepackage{adjustbox}
\usepackage{textcomp}
\usepackage{arydshln}
\usepackage{tikz}
\usetikzlibrary{arrows.meta, positioning, shapes.geometric, fit, backgrounds}

\newcommand{\abs}[1]{\lvert #1\rvert}

\newcommand{\bR}{\mathbb{R}}
\newcommand{\norm}[1]{\left\lVert#1\right\rVert}
\newcommand{\unitnorm}[1]{\operatorname{unit}\!\left(#1\right)}

\newcommand{\cL}{\mathcal{L}}

\newcommand{\cT}{\mathcal{T}}
\newcommand{\kl}{\text{KL}}

\newcommand{\xv}{\boldsymbol{x}}

\newcommand{\zv}{\boldsymbol{z}}
\newcommand{\uv}{\boldsymbol{u}}
\renewcommand{\vv}{\boldsymbol{v}}

\newcommand{\bbE}{\mathbb{E}}
\newcommand{\cD}{\mathcal{D}}
\newcommand{\cC}{\mathcal{C}}
\newcommand{\cF}{\mathcal{F}}

\newtheorem{proposition}{Proposition}

\title{ Test-Time Optimization of Query Embeddings\\with Ranking Aware Reward Maximization}

\correspondingauthor{jxwu@google.com\\
This work was done during Tianyu Chen's internship at Google DeepMind.}

\reportnumber{ } 

\renewcommand{\today}{2026-08-11}

\author[1]{Tianyu Chen}
\author[2]{Jiaxing Wu}

\affil[1]{The University of Texas at Austin}
\affil[2]{\thepa{}{}}

\begin{abstract}
Dense retrievers rank documents using vector similarity between a frozen encoder and a precomputed index. While test-time ranking rewards from a reranker or LLM judge can improve results, existing methods discard this signal after a single query. Updating the retriever's weights makes rewards reusable, but this requires parameter access—which is unavailable for closed-source models—and is computationally prohibitive. We propose \textsc{TTT-Embed} (Test-Time Tuning of Embeddings), a framework that distills ranking rewards into a lightweight, learned vector within the output embedding space of a frozen model. This vector is optimized purely from scalar ranking scores assigned to the retriever's own candidate documents, requiring no access to model weights, ground-truth labels, or modifications to index. A single scope parameter controls rewards reuse (global, task, or query), enabling a principled trade-off between reusability and specificity under a fixed reward computation budget. We demonstrate that as the available reward budget scales, the optimal sharing scope shifts dynamically from global-wise to task-wise and finally to query-wise. Evaluated across five embedding models and 15 MTEB retrieval tasks, \textsc{TTT-Embed} improves test-time retrieval by up to +8.36 nDCG@10. Crucially, the learned states generalize effectively to unseen queries (up to +8.57 nDCG@10) and unseen tasks (up to +4.71 nDCG@10). Furthermore, \textsc{TTT-Embed} successfully resolves catastrophic forgetting: by leaving base weights entirely frozen, it recovers degraded general capabilities (up to +8.00 nDCG@10, even surpassing the original base model) while preserving in-domain specialization. These results establish ranking rewards as a reusable test-time state, enabling budget-efficient adaptation for any embedding model, including closed-source APIs.
\end{abstract}

\begin{document}

\maketitle

\section{Introduction}
\begin{figure}[t]
    \centering
    \includegraphics[width=\columnwidth]{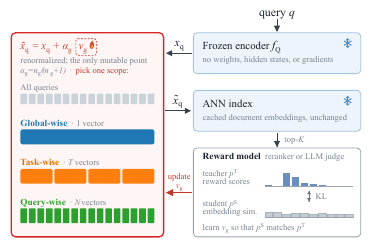}
    \caption{\textbf{\textsc{TTT-Embed} keeps ranking rewards as a reusable
    learned vector in the embedding space.} A query is encoded once by
    the frozen encoder, the learned vector $\vv_g$ is added to the returned query
    vector. The
    vector $\vv_g$ is the only updated quantity. An
    external reward model scores the retrieved candidates; the induced teacher
    distribution $p^{\mathrm{T}}$ is the target that $\vv_g$ is trained to
    match with the student distribution $p^{\mathrm{S}}$ read off the
    embeddings. The sharing scope $g$ determines how broadly each $\vv_g$ is
    reused, and is
    the only degree of freedom that changes between our three variants. Nothing inside the embedding system is modified.}
    \label{fig:teaser}
\end{figure}

Dense retrievers encode queries and documents independently, ranking them by vector similarity \citep{karpukhin2020dense, xiong2020approximate}. Their efficiency stems from a frozen encoder and a precomputed document index. While a reranker provides a stronger relevance signal at test time \citep{nogueira2019passage}, this signal is typically consumed only once. Because direct reranking and query-specific optimization update only the current result list or representation, the ranking rewards is discarded after the query is served.

Model updates offer the opposite trade-off: they preserve information across queries but require access to the retriever's parameters. This is undesirable when deployments rely on cached embeddings, and impossible for closed-source APIs. Moreover, recent embedding models inherit the immense scale of foundation models, such as the 0.6B--8B parameter Qwen3-Embedding or the Gemini family \citep{zhang2025qwen3, lee2025gemini,shanbhogue2026gemini,vera2025embeddinggemma}. Consequently, updating these systems is computationally prohibitive—even when weights are available—and necessitates curating additional training data.

Test-time optimization offers a complementary solution: spending additional computation on deployed inputs while keeping the base model frozen. This paradigm—already successful in distribution shift adaptation and scaling LLM inference \citep{sun2020test, liu2021ttt++, snell2024scaling}—achieves performance gains without requiring model weights or new training data. 

Utilizing ranking rewards from rerankers during test time holds significant promise for retrieval tasks. However, computing a cross-encoder or LLM judge for every query--document pair introduces prohibitive latency and computational costs \citep{hui2022ed2lm}. In practice, real-world systems operate under finite budgets, selecting only a subset of queries for relevance annotation or using cost-aware API cascades \citep{wang2024simple,chen2023frugalgpt}. We provide additional discussion of related work in the Appendix. To address this deployment bottleneck, we ask a central question: \emph{\textbf{How can we best utilize test-time ranking rewards under a fixed computational budget by sharing signals across queries, without accessing or updating model weights?}}

To answer this, we propose \textsc{TTT-Embed} (\emph{Test-Time Tuning of Embeddings}), a framework that distills ranking rewards into a lightweight, reusable state in the output embedding space. Given an initial embedding $\xv_q$ for query $q$, we learn a residual vector $\vv_g$ and retrieve with $\widetilde{\xv}_q=\xv_q+\alpha_g\vv_g$ (Figure~\ref{fig:teaser}). The scope $g$ determines which queries share this learned rewards. We investigate three sharing scopes: global-wise (across all queries), task-wise (restricted to a specific task), and query-wise (private to one query). Because each $\vv_g$ is learned using scalar scores from a reward model applied solely to the query representation, the encoder and ANN index remain entirely frozen. The resulting test-time $\vv_g$ is inexpensive to store and fully compatible with closed-source APIs.

Through extensive experiments across five embedding models on the MTEB benchmark (comprising 15 retrieval tasks) and SKILLRET, we address the following research questions:

\begin{itemize}
    \item \textbf{RQ1:} Does \textsc{TTT-Embed} improve retrieval performance, particularly its ability to generalize to unseen queries and zero-rewards tasks?
    \item \textbf{RQ2:} To what extent can \textsc{TTT-Embed} recover degraded general performance while preserving the in-domain specialization of a fine-tuned model?
    \item \textbf{RQ3:} What are the optimal strategies for real-world deployment regarding rewards budgets, sharing scopes, the breadth-versus-depth trade-off, and tuning-free magnitude computation?
\end{itemize}

Our contributions are threefold:

\begin{enumerate}[leftmargin=1.5em, itemsep=3pt]
    \item \textbf{A Novel Test-Time Tuning Framework:} We propose \textsc{TTT-Embed}, which introduces a theoretically grounded, vector-based adaptation interface ($\xv_q+\alpha_g\vv_g$) with flexible sharing scopes. Operating entirely on output embeddings and off-the-shelf reward feedback, it requires no access to model weights or ground-truth labels, making it suited for both open-source and API-based models.
    \item \textbf{Generalization and Forgetting Recovery:} We empirically demonstrate that \textsc{TTT-Embed} yields significant gains on standard benchmarks and generalizes effectively to unseen queries and tasks. Furthermore, by keeping model weights entirely frozen, it resolves the trade-off between specialization and generalizability—recovering the degraded general capabilities of fine-tuned models without sacrificing their in-domain expertise.
    \item \textbf{Deployment Optimization Guidelines:} We provide a thorough analysis of how fixed computational budgets dictate deployment strategies, offering concrete guidelines for selecting the optimal sharing scope, navigating the breadth-versus-depth trade-off, and automatically determining the application magnitude ($\alpha_g$) to eliminate manual tuning.
\end{enumerate}
\section{\textsc{TTT-Embed}: Reusable Test-Time Tuning in Embedding Space}
\label{sec:method}

\textsc{TTT-Embed} converts scalar ranking rewards into a learned query-side
vector outside a frozen embedding model. Its learning rule is unchanged
across sharing scopes; only which rewards queries learn one $\vv_g$ and
which served queries reuse it differ.

\subsection{Problem Setup and Reward Budget}

Let a potentially closed-source embedding system provide a query encoder
$f_{\mathrm{Q}}$ and document encoder $f_{\mathrm{D}}$, which may be the same underlying model. Then for any nonzero vector $\boldsymbol{a}$, let $\unitnorm{\boldsymbol{a}}\coloneqq \boldsymbol{a}/\norm{\boldsymbol{a}}$ denote its $\ell_2$ normalization. For a
query $q$ and document $d$, we write
$\xv_q=\unitnorm{f_{\mathrm{Q}}(q)}$ and
$\zv_d=\unitnorm{f_{\mathrm{D}}(d)}$, with base retrieval score
$s(q,d)=\xv_q^\top\zv_d$. The document embeddings
$\{\zv_d\}_{d\in\cD}$ are computed once and stored in an approximate
nearest-neighbor (ANN) index. We require only vector outputs from the
embedding system; its parameters, hidden states, and gradients are never
accessed.

Let $E$ be a workload of $N=\abs{E}$ queries and let
$\cF\subseteq E$ contain the queries selected for rewards. For each
$q\in\cF$, the base retriever returns top-K documents
$\cC_q=\{d_{q1},\ldots,d_{qk_q}\}$, and an external reward model---such as a
cross-encoder reranker or an LLM judge---assigns each pair a scalar relevance
score $r_{qi}$. A larger value denotes greater predicted relevance. We count
one scored query--document pair as one atomic reward, giving total and
normalized budgets
\begin{equation}
    B = \sum_{q\in\cF}\abs{\cC_q},
    \qquad
    b = \frac{B}{N}.
    \label{eq:budget}
\end{equation}
Thus, $b$ is the average number of external rewards available per 
query, even when they are not allocated uniformly. The budget excludes the
base embedding and ANN-search costs shared by all systems. Ground-truth
relevance labels are used only for evaluation, never to select rewards or
learn $\vv_g$.

\subsection{Learning $\vv_g$ from Ranking Rewards}

For now, let $g$ denote any group of rewarded queries
$\cF_g\subseteq\cF$ whose rewards are used to learn one vector
$\vv_g\in\bR^p$; the next subsection instantiates the possible groups. A query
assigned to $g$ is represented as
$\widetilde{\xv}_q(\vv_g)=\unitnorm{\xv_q+\alpha_g\vv_g}$, where
$\alpha_g\in[0,1]$ controls how strongly $\vv_g$ is applied.

\begin{algorithm}[t]
\caption{Learning and Deploying Scoped Vectors}
\label{alg:scoped_vector}
\begin{algorithmic}[1]
\REQUIRE Frozen embeddings $\{\xv_q\}$ and $\{\zv_d\}$; scoped rewards
groups $\{\cF_g\}$ with candidates $\cC_q$ and rewards $r_{qi}$
\ENSURE Stored test-time state $\{(\vv_g^*,\alpha_g)\}$
\FOR{each rewards group $g$ induced by the chosen scope}
    \STATE Form $p_q^{\mathrm T}=\operatorname{softmax}(r_q/\tau_{\mathrm T})$
    for every $q\in\cF_g$
    \STATE Initialize $\vv_g\leftarrow\mathbf{0}$ and optimize
    Equation~\eqref{eq:vector_objective}
    \STATE Set $n_g\leftarrow|\cF_g|$ and $\alpha_g\leftarrow n_g/(n_g+1)$
    \STATE Store $(\vv_g^*,\alpha_g)$ outside the embedding model
\ENDFOR
\FOR{each served query $q$ with an applicable group $g$}
    \STATE $\widetilde{\xv}_q\leftarrow
    \unitnorm{\xv_q+\alpha_g\vv_g^*}$
    \STATE Retrieve with $\widetilde{\xv}_q$ from the unchanged ANN index
\ENDFOR
\STATE Queries without an applicable state use their base embedding $\xv_q$
\end{algorithmic}
\end{algorithm}

This additive form preserves the retrieval interface. For a fixed query,
normalization is a positive constant across documents, so
$\widetilde{\xv}_q(\vv_g)^\top\zv_d$ ranks them exactly as the base score
$\xv_q^\top\zv_d$ plus the additive term
$\alpha_g\vv_g^\top\zv_d$. At deployment, we use
$\widetilde{\xv}_q$ as the query vector for a new search over the unchanged
ANN index. It has the same dimensionality and uses the same similarity
function as $\xv_q$, so neither the indexed document embeddings nor the index
structure needs to be modified. This second retrieval may retrieve documents
outside the original reward set $\cC_q$.
Appendix proves the ranking equivalence and
states its implementation details.

We distill the reward model's scores over each candidate set. First, we form the
teacher distribution
\begin{equation}
    p^{\mathrm{T}}_{qi}
    =
    \frac{\exp(r_{qi}/\tau_{\mathrm{T}})}
         {\sum_{j=1}^{k_q}\exp(r_{qj}/\tau_{\mathrm{T}})}.
    \label{eq:teacher_distribution}
\end{equation}
For a candidate vector $\vv$, we define the induced student distribution as $\overline{\xv}_q(\vv)
    = \unitnorm{\xv_q+\vv}$
\begin{equation}
    p^{\mathrm{S}}_{qi}(\vv)
    =
    \frac{\exp\!\left(\overline{\xv}_q(\vv)^\top\zv_{d_{qi}}/\tau_{\mathrm{S}}\right)}
         {\sum_{j=1}^{k_q}\exp\!\left(\overline{\xv}_q(\vv)^\top\zv_{d_{qj}}/\tau_{\mathrm{S}}\right)}.
    \label{eq:student_distribution}
\end{equation}
We learn $\vv_g$ by knowledge distillation with ridge regularization:
\begin{equation}
    \vv_g^*
    =
    \arg\min_{\vv\in\bR^p}
    \sum_{q\in\cF_g} w_q\,
    \kl\!\left(p^{\mathrm{T}}_q\,\middle\|\,p^{\mathrm{S}}_q(\vv)\right)
    + \lambda\norm{\vv}^2
    \label{eq:vector_objective}
\end{equation}
where $\sum_{q\in\cF_g}w_q=1$. Equation~\eqref{eq:vector_objective} is a listwise soft-label knowledge
distillation objective \citep{hinton2015distilling,cao2007learning}: it retains
the reward model's relative scores within a candidate set rather than
constructing binary positive--negative pairs or updating the encoder with an
in-batch contrastive loss. Our contribution lies in the learned vector and
its reuse across scopes.
Analysis section compares alternative
objectives, and theoretical analysis of the objective and definition of $w_g$ are
deferred to the appendix.

\subsection{Sharing Scopes for Learned Vectors}

The group assignment determines the persistence of the learned state:
\begin{itemize}[leftmargin=1.5em, itemsep=2pt]
    \item \textbf{Global-wise:} one $\vv_g$ pools rewards across tasks and is
    reused by every query served by the same embedding model.
    \item \textbf{Task-wise:} one $\vv_g$ pools and reuses rewards within
    each retrieval task.
    \item \textbf{Query-wise:} one $\vv_g$ is learned for each rewarded query
    and remains private to that query.
\end{itemize}
For a global-wise state, the weights $w_q$ in
Equation~\eqref{eq:vector_objective} first average queries within each task and then average tasks uniformly. For a task-wise state,
they reduce to the ordinary average over queries; for a query-wise state, they
reduce to a single-query loss. Appendix gives the
unified formula.

When every query receives rewards for $b$ retrieved documents, $\cF=E$ and $B=Nb$, yielding the standard full-coverage setting. All scopes then use the same query--document rewards and differ only in which queries share a common state $\vv_g$. Under a limited budget, $\cF$ may cover only a subset of $E$, and the reward depth $b_q$ may vary across queries. A global $\vv_g$ can be applied to all queries, while a task-wise $\vv_g$ can be applied to queries within the same task. In contrast, a query-wise state affects only queries in $\cF$; queries without rewards retain their base embeddings.

\textsc{TTT-Embed} does not prescribe which queries or documents receive rewards; it accepts any externally selected set of rewarded query--document pairs. For controlled evaluation, we use a task-balanced water-filling allocation with nested budgets, ensuring that every sharing scope and baseline receives the same reward pairs at each budget. The full protocol is provided in the Appendix.

\subsection{Confidence-Aware Magnitude for Deployment}
\label{sec:confidence}

Equation~\eqref{eq:vector_objective} first learns $\vv_g^*$. At deployment, its appropriate magnitude depends on estimation confidence, which increases with the reward budget. Let $n_g$ denote the number of rewarded queries used to learn state $\vv_g$. We apply the evidence-adaptive scaling rule
\begin{equation}
    \alpha_g = \frac{n_g}{n_g+1},
    \label{eq:alpha}
\end{equation}
and deploy $\alpha_g\vv_g^*$. The following proposition provides a Bayesian justification for this label-free rule. Table~\ref{tab:alpha_ablation} confirms its practical benefits: without tuning on evaluation labels, it achieves the lowest average regret (0.11) to the oracle, compared with 0.17 for the unshrunk choice $\alpha=1$.

\begin{proposition}[Evidence-adaptive vector shrinkage]
\label{prop:bayesian_shrinkage}
Let $\uv_g\in\bR^p$ be a latent transferable vector and
$\boldsymbol{\Sigma}\succ0$. If
$\uv_g\sim\mathcal{N}(\mathbf{0},\boldsymbol{\Sigma})$ and
$\vv_g^*\mid\uv_g,n_g\sim
\mathcal{N}(\uv_g,\boldsymbol{\Sigma}/n_g)$ for $n_g\geq1$, then the Bayes
estimator under squared error is
\begin{equation}
    \mathbb{E}[\uv_g\mid\vv_g^*,n_g]
    =\frac{n_g}{n_g+1}\vv_g^*.
    \label{eq:posterior_dose}
\end{equation}
\end{proposition}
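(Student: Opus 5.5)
The plan is a direct Gaussian conjugacy computation, in either of two equivalent ways. Throughout, treat $n_g\geq 1$ as fixed and known, and condition on it.

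\textbf{Step 1 (Bayes estimator).} Under squared-error loss $\bbE\norm{\hat{\uv}-\uv_g}^2$, the Bayes estimator is the posterior mean $\bbE[\uv_g\mid\vv_g^*,n_g]$. This follows from the usual bias--variance decomposition of the posterior expected loss, since both the prior and the likelihood have finite second moments. It therefore suffices to compute that posterior mean.

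\textbf{Step 2 (joint Gaussian).} Write $\vv_g^*=\uv_g+\boldsymbol{\epsilon}$, with $\boldsymbol{\epsilon}\sim\mathcal{N}(\mathbf{0},\boldsymbol{\Sigma}/n_g)$ independent of $\uv_g$; this reproduces the stated conditional law. The pair $(\uv_g,\vv_g^*)$ is then jointly Gaussian with mean zero and
\[
\mathrm{Cov}(\uv_g,\vv_g^*)=\boldsymbol{\Sigma},
\qquad
\mathrm{Cov}(\vv_g^*)=\boldsymbol{\Sigma}+\boldsymbol{\Sigma}/n_g=\tfrac{n_g+1}{n_g}\boldsymbol{\Sigma}.
\]
The Gaussian conditioning formula gives
\[
\bbE[\uv_g\mid\vv_g^*]=\boldsymbol{\Sigma}\bigl(\tfrac{n_g+1}{n_g}\boldsymbol{\Sigma}\bigr)^{-1}\vv_g^*=\tfrac{n_g}{n_g+1}\vv_g^*.
\]
The inverse exists because $\boldsymbol{\Sigma}\succ0$.

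\textbf{Step 2 (alternative: completing the square).} The log-posterior is, up to constants,
\[
-\tfrac12\uv^\top\boldsymbol{\Sigma}^{-1}\uv-\tfrac{n_g}{2}(\vv_g^*-\uv)^\top\boldsymbol{\Sigma}^{-1}(\vv_g^*-\uv).
\]
Its posterior precision is $(1+n_g)\boldsymbol{\Sigma}^{-1}$, and its linear term is $n_g\boldsymbol{\Sigma}^{-1}\vv_g^*$. The posterior is therefore $\mathcal{N}\bigl(\tfrac{n_g}{n_g+1}\vv_g^*,\ \boldsymbol{\Sigma}/(n_g+1)\bigr)$, which gives \eqref{eq:posterior_dose}.

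\textbf{Main point to check.} There is no real obstacle here. The essential observation is that the noise covariance is proportional to the prior covariance, so the matrix shrinkage $\boldsymbol{\Sigma}(\boldsymbol{\Sigma}+\boldsymbol{\Sigma}/n_g)^{-1}$ collapses to a scalar independent of $\boldsymbol{\Sigma}$. This is what makes the rule label-free. The only care needed is confirming the independence structure implied by the hierarchical specification and noting invertibility.
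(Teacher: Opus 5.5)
Your proof is correct and takes essentially the paper's route: the paper's proof is your completing-the-square variant, adding the prior precision $\boldsymbol{\Sigma}^{-1}$ and likelihood precision $n_g\boldsymbol{\Sigma}^{-1}$ and reading off the mean from the information vector $n_g\boldsymbol{\Sigma}^{-1}\vv_g^*$; your joint-Gaussian conditioning computation is the equivalent covariance-form version of the same calculation. Your Step~1, identifying the Bayes estimator under squared error with the posterior mean, is a point the paper leaves implicit, and you justify it correctly.
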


\begin{proof}
The prior and likelihood precisions are $\boldsymbol{\Sigma}^{-1}$ and
$n_g\boldsymbol{\Sigma}^{-1}$. Hence the posterior precision is
$(n_g+1)\boldsymbol{\Sigma}^{-1}$ and its information vector is
$n_g\boldsymbol{\Sigma}^{-1}\vv_g^*$, giving
Equation~\eqref{eq:posterior_dose}.
\end{proof}

This normal-means model motivates uncertainty calibration but is not assumed when optimizing Equation~\eqref{eq:vector_objective}. The scaling factor $\alpha_g$ shrinks toward $0.5$ for small $n_g$ and approaches $1$ as more rewarded queries are pooled. Shared states count all rewarded queries within their scope, whereas a query-wise state has $n_g=1$ and thus $\alpha_g=0.5$. We fix this rule a priori and never tune it on evaluation labels.

Algorithm~\ref{alg:scoped_vector} summarizes the complete procedure. The
learning and deployment rules are identical for all three scopes; changing the
scope changes only which rewarded queries share a group $g$ and hence one
stored state $(\vv_g^*,\alpha_g)$.

\section{Experimental Setup}
\label{sec:experiments}
\subsection{Tasks and Embedding Models}
We evaluate our approach on 15 English retrieval tasks from the MTEB benchmark \citep{muennighoff2022mteb}, spanning question answering, fact verification, duplicate detection, argument retrieval, and scientific and financial search. In total, the suite comprises $12{,}263$ test queries (ranging from 43 to 1,595 per task) and $1{,}661{,}208$ corpus entries (ranging from $8{,}674$ to $303{,}732$ per task). To prevent large datasets from skewing the results, we average performance equally across all tasks. Since real-world deployments frequently utilize query classification components tailored to specific business needs, MTEB's predefined task boundaries serve as an appropriate proxy for our task-wise evaluation.

We additionally use SKILLRET \cite{cho2026skillret}, an agent-skill retrieval benchmark comprising $4{,}997$ queries and $6{,}660$ candidate skill cards, as the specialization corpus in the degraded-performance recovery experiment described in the preceding section. SKILLRET was released after the training-data cutoffs of all embedding and reward models used in our experiments, allowing us to evaluate whether \textsc{TTT-Embed} can recover performance degraded by domain-specific fine-tuning while minimizing the risk of benchmark contamination.

We use five embedding models: Qwen3-Embedding-0.6B,
Qwen3-Embedding-4B, EmbeddingGemma-300M, Gemini Embedding 1, and Gemini
Embedding 2 \citep{zhang2025qwen3,vera2025embeddinggemma,lee2025gemini,shanbhogue2026gemini}. The first three
expose their weights, whereas the two Gemini models are accessed only through
their embedding APIs. Document embeddings and the
search index remain fixed. 

\subsection{Metrics}
We report nDCG@10 and its matched change from the raw-query retriever,
$\Delta\mathrm{nDCG@10}$.  To summarize a full budget curve, we also report
the area under the gain--budget curve (AUBC), and details defer to Appendix. 

\subsection{Reward Models}
Our reward model is Qwen3-Reranker-4B \cite{zhang2025qwen3}. For each query--document pair,
we compute its score from the raw forward-pass logits of the \emph{Yes} and
\emph{No} tokens:
\begin{equation}
    r(q,d)
    = P(\mathrm{Yes}\mid q,d)
    = \frac{\exp(\ell_{\mathrm{Yes}})}
           {\exp(\ell_{\mathrm{Yes}})+\exp(\ell_{\mathrm{No}})}
    \in[0,1].
    \label{eq:reranker_reward}
\end{equation}
This computation is deterministic and does not sample text. For the
general-purpose judges, we instead use a fixed 0--5 relevance rubric.
Their final numeric grade is mapped to $[0,1]$.
Appendix specifies the prompts, score extraction,
and generation settings for each reward model.
Section~\ref{sec:reward_model_selection} compares the resulting reward
models on the same candidate pairs.

\subsection{Baselines}

For an evaluation set of $N$ queries, we report the normalized reward budget $b=B/N$, where $B$ is the total number of judgments and $b$ is the average number of rewards per query. We allocate rewards using a task-balanced water-filling schedule with nested budgets, as detailed in the Appendix. All methods receive exactly the same rewarded query--document pairs. We compare the following methods.

\textbf{Raw retrieval} uses no rewards and ranks the corpus according to
$\xv_q^\top\zv_d$.
\textbf{Direct reranking} reorders the rewarded portion of each selected query's retrieved candidate list according to $r(q,d)$.
\textbf{Query-wise \textsc{TTT-Embed}} learns a separate state $\vv_q$ from the rewards associated with each query and reranks its candidates using the adapted query embedding $\xv_q+\alpha\vv_q$.
\textbf{Task-wise and global \textsc{TTT-Embed}} pool rewards at the task or global level to learn a shared state $\vv_g$, which is applied to all queries within the corresponding scope. Under partial query coverage, only the shared variants can adapt rankings for queries that receive no rewards. None of the methods updates the encoder parameters or document embeddings.

\section{Results}
\label{sec:budget_results}

\subsection{Performance Gains and Optimal Sharing Scope Across Reward Budgets}

\begin{figure*}[t]
    \centering
    \includegraphics[width=\textwidth]{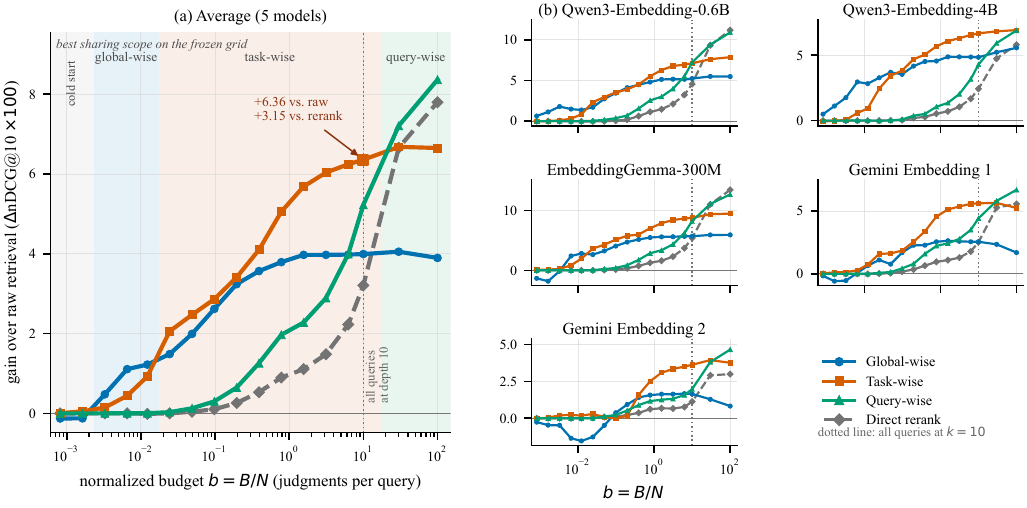}
    \caption{\textbf{\textsc{TTT-Embed} improves retrieval, and the optimal scope changes with the budget.} Panel (a) reports the performance gain over raw retrieval across different normalized budgets, averaged over five embedding models. The background shading indicates the best sharing scope for each sampled budget regime. Panel (b) details these results for each individual embedding model. The dotted line marks full query coverage at depth 10. At this threshold, task-wise \textsc{TTT-Embed} improves raw retrieval by 6.36 nDCG@10 points and outperforms cost-matched direct reranking by 3.15 points. Rewards are from Qwen3-Reranker-4B.}
    \label{fig:budget_scope}
\end{figure*}

\begin{table}[t]
    \centering
    \scriptsize
    \setlength{\tabcolsep}{2.6pt}
    \caption{
    Representative scope-optimal operating points across
    budget regimes. Entries are gains over raw retrieval in
    $\Delta\mathrm{nDCG@10}\times100$. The three columns select the best
    \textsc{TTT-Embed} scope on the average curve at representative low,
    intermediate, and high budgets in Figure~\ref{fig:budget_scope}. Direct
    reranking is evaluated at the same column-specific budget; exact grid construction is given in Appendix~\ref{app:experimental_details}.}
    \label{tab:scope_peak}
    \begin{tabular}{lrrr}
        \toprule
        & \multicolumn{3}{c}{\textsc{TTT-Embed} state scope} \\
        \cmidrule(lr){2-4}
        Embedding model
            & \shortstack{Global}
            & \shortstack{Task}
            & \shortstack{Query} \\
        \midrule
        Budget $b$ & $10^{-2}$ & 10 & 100 \\
        \midrule
        Qwen3-Embedding-0.6B &  1.37 & 7.12 & 10.92 \\
        Qwen3-Embedding-4B   &  2.82 & 6.67 &  6.89 \\
        EmbeddingGemma-300M  &  2.81 & 8.76 & 12.67 \\
        Gemini Embedding 1   &  0.66 & 5.62 &  6.69 \\
        Gemini Embedding 2   & $-1.54$ & 3.63 &  4.66 \\
        \midrule
        \textbf{\textsc{TTT-Embed} Average}
                       & \textbf{1.23} & \textbf{6.36} & \textbf{8.36} \\
        Direct rerank  & $-0.01$ & 3.21 & 7.80 \\
        \textsc{TTT-Embed} $-$ direct
                       & \textbf{$+1.23$} & \textbf{$+3.15$}
                       & \textbf{$+0.56$} \\
        \bottomrule
    \end{tabular}
\end{table}

Figure~\ref{fig:budget_scope} presents the main results of \textsc{TTT-Embed} across 15 MTEB retrieval tasks, three sharing scopes, and a range of normalized reward budgets. Results are averaged over five embedding models: Qwen3-Embedding-0.6B, Qwen3-Embedding-4B, EmbeddingGemma-300M, Gemini Embedding 1, and Gemini Embedding 2. The results reveal a clear progression from broad reuse to query-specific specialization. On our discrete budget grid, global sharing performs best for $b\in[0.003,0.012]$, as a state learned from even limited rewards can be applied to the entire query population. Task-wise sharing performs best from $b=0.024$ through $b=10$, whereas query-wise optimization becomes optimal at $b=30$ and remains so at $b=100$. While these operating points should not be interpreted as universal thresholds. The model-specific panels also reveal differences in different embedding models. Nevertheless, the upper envelope across the three sharing scopes remains above direct reranking throughout the average budget curve. When integrated over the full log-budget range, task-wise sharing also achieves the highest gain AUBC@100 for all five embedding models (see Appendix).

At $b=10$, corresponding to an average of ten judgments per query, task-wise \textsc{TTT-Embed} improves nDCG@10 over raw retrieval by 6.36 points on average (Figure~\ref{fig:budget_scope}). The improvement is positive for every embedding model, ranging from 3.63 to 8.76 points. Using the same rewarded pairs, direct reranking yields an average improvement of only 3.21 points, giving the learned task-wise state a 3.15-point advantage. At $b=100$, query-wise optimization achieves an 8.36-point gain and continues to outperform direct reranking by 0.56 points. These results show that the benefit does not arise solely from applying reward-model scores to retrieved candidates. Distilling these scores into the query embeddings enables corpus-wide retrieval improvements beyond those obtained by using the scores only to reorder the original candidate list.

Table~\ref{tab:scope_peak} highlights three representative points from the scope transition shown in Figure~\ref{fig:budget_scope}. At $b=10^{-2}$, global sharing is the best-performing scope on the average curve, improving nDCG@10 by 1.23 points, compared with a $-0.01$-point change from direct reranking. At $b=10$, task-wise sharing yields a 6.36-point gain and outperforms direct reranking by 3.15 points. At $b=100$, query-wise optimization achieves an 8.36-point gain while retaining a 0.56-point advantage over direct reranking. Thus, the best sharing scope for each budget regime consistently outperforms cost-matched direct reranking at all three representative operating points.



\subsection{Generalization to Unseen Queries and Tasks}
\label{sec:transfer_results}

\begin{table}[t]
    \centering
    \scriptsize
    \setlength{\tabcolsep}{2.4pt}
\caption{\textbf{The learned $\vv_g$ improves retrieval on held-out queries and tasks.}
Entries report gains over raw retrieval in $\Delta\mathrm{nDCG@10}\times100$. The first two columns evaluate global and task-wise sharing when 80\% of the queries each provide 10 rewards, with performance measured on the remaining 20\% of unseen queries. The final column reports the leave-one-task-out (LOTO) gain achieved by a global $\vv_g$ learned from the other 14 tasks.}
    \label{tab:transfer}
    \begin{tabular}{@{}lrrr@{}}
        \toprule
        & \multicolumn{2}{c}{Held-out query}
        & \multicolumn{1}{c}{Held-out task} \\
        \cmidrule(lr){2-3}\cmidrule(l){4-4}
        Embedding model
            & \shortstack{Global}
            & \shortstack{Task}
            & LOTO \\
        \midrule
        Qwen3-Embedding-0.6B & 4.88 & 6.03 & 4.36 \\
        Qwen3-Embedding-4B   & 4.93 & 6.14 & 4.16 \\
        EmbeddingGemma-300M  & 6.42 & 8.57 & 4.71 \\
        Gemini Embedding 1   & 2.40 & 3.95 & 1.36 \\
        Gemini Embedding 2   & 1.49 & 3.17 & 0.85 \\
        \midrule
        Average              & 4.02 & 5.57 & 3.09 \\
        \bottomrule
    \end{tabular}
\end{table}

In the previous section, at budget $B$, we denote the queries with rewards by $S_B$ and the remaining queries by $U_B=E\setminus S_B$. We learn $\vv_g$ at each sharing scope using rewards from $S_B$ and evaluate it on the full query set $E$. The fixed-population curves establish strong utility under a standard deployment budget. To ensure this performance translates smoothly to real-world deployments—where the system must handle entirely unseen queries, we conduct two stricter transfer experiments to evaluate generalization to unseen queries and tasks. For \emph{global and task-wise transfer}, we learn a shared $\vv_g$ using rewards from $S_B$ and evaluate it exclusively on the unseen queries $U_B$. Because queries in $U_B$ contribute no rewards during optimization, any improvement on this subset must arise from information shared within tasks or globally. For \emph{cross-task transfer}, we learn a global state $\vv_g$ from 14 tasks and apply it to the remaining task without using any rewards from the target task. We repeat this procedure with each of the 15 tasks serving as the held-out task.

Table~\ref{tab:transfer} shows positive held-out-query endpoint gains for
both shared scopes and all five embedding models. When 80\% queries has 10 rewards coverage and remaining 20\% queries are unseen,
the task-wise state improves the untouched queries by 3.17--8.57 points
(5.57 on average), despite those queries never being scored; global-wise sharing
also gains 4.02 points on average. The stronger leave-one-task-out (LOTO)
test is positive on average for every model, including both closed Gemini
models: a state learned on the other 14 tasks improves the held-out task by
0.85--4.71 points, with a 3.09-point average gain. These results directly
establish that the learned state stores reusable information rather than
merely fitting the queries that produced it. Full
transfer curves and per-task results appear in
Appendix.

\subsection{General Capability Gain While Preserving Domain Specialization}
\label{sec:sft_recovery}

\begin{figure}[t]
    \centering
    \includegraphics[width=\columnwidth]{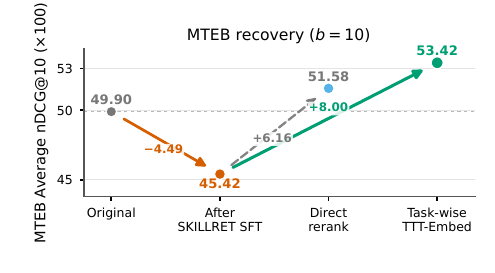}
    \caption{\textbf{\textsc{TTT-Embed} recovers performance lost during domain specialization without weight updates.}
    Fine-tuning Qwen3-Embedding-0.6B on SKILLRET improves in-domain nDCG@10 from 52.24 to 66.21 but reduces the 15-task MTEB average from 49.90 to 45.42. With $b=10$, task-wise \textsc{TTT-Embed} raises the frozen model's MTEB score to 53.42, exceeding both the base model (49.90) and direct reranking (51.58), while preserving the specialized model weights. Scores are nDCG@10 $\times100$.}
    \label{fig:sft_recovery}
\end{figure}

Although domain-specific fine-tuning can reliably improve in-domain performance, it often degrades performance on broader benchmarks, creating a trade-off between specialization and generalization. \textsc{TTT-Embed} provides a elegant solution to this trade-off: it can recover lost capabilities at test time without modifying the underlying model parameters.

In Figure~\ref{fig:sft_recovery}, we demonstrate this utilizing the SKILLRET task \cite{cho2026skillret}. Specifically, we evaluate SKILLRET-SFT-0.6B, a model fine-tuned on top of the base Qwen3-Embedding-0.6B. While this specialization yields target-domain improvements, it suffers out-of-domain generalization, showing a 4.49-point degradation across the 15 MTEB tasks. \textsc{TTT-Embed} can not only recover this performance degradation, but also surpass the original, unspecialized model. It yields an 8.00-point absolute improvement that elevates the MTEB score to 53.42—exceeding the original base model by 3.51 points and outperforming direct reranking by 1.84 points. Crucially, \textsc{TTT-Embed} achieves these \textbf{generalization gains while fully maintaining in-domain specialization}, since the underlying model weights are entirely frozen. Comprehensive global-wise, task-wise, and query-wise results, alongside direct reranking baselines, are detailed in Appendix.

\section{Analysis}
\subsection{Breadth vs Depth for a Fixed  Reward Budget}
\label{sec:breadth_depth}
For a fixed reward budget $b$, allocation within a given sharing scope involves a trade-off between two strategies: a \textit{breadth-primary} approach, which distributes rewards across a larger number of queries with fewer judged documents per query, and a \textit{depth-primary} approach, which concentrates rewards on more documents per query across a smaller subset of queries. Formally, let $\rho=m/N$ denote the fraction of queries receiving rewards (characterizing breadth), and let $k$ denote the number of documents judged per selected query (characterizing depth). The total budget allocation is therefore defined as $b=\rho k$.

\begin{figure}[t]
    \centering
    \includegraphics[width=\columnwidth]{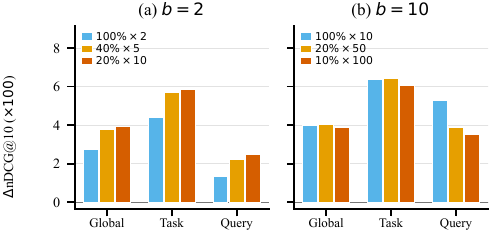}
    \caption{\textbf{Reward allocation at a fixed budget.}
    At the smaller budget $b=2$, panel (a) shows that rewards depth is more valuable than maximal query coverage. At the bigger budget $b=10$, panel (b) shows that broader coverage becomes more beneficial. Results are averaged gains over raw retrieval in $\Delta\mathrm{nDCG@10}\times100$.}
    \label{fig:breadth_depth}
\end{figure}

Figure~\ref{fig:breadth_depth} demonstrates that the optimal balance between breadth and depth is budget-dependent. Under a small budget ($b=2$, Panel a), a depth-primary allocation yields superior performance across all three scopes, suggesting that rewards must encompass a critical mass of documents per query to provide reliable ranking signals. Conversely, with a larger budget ($b=10$, Panel b), a breadth-primary allocation proves more effective. This highlights diminishing returns for depth: once a sufficient number of pairs have provided rewards, expanding query coverage yields greater performance gains than gathering deeper rewards.




\subsection{Tuning-Free Evidence-Adaptive Magnitude}
\label{sec:alpha_ablation}
We evaluate whether the evidence-adaptive magnitude in Eq~\eqref{eq:alpha} performs competitively with manually tuned alternatives. Specifically, we compare $\alpha(n)=n/(n+1)$ with seven fixed values ranging from $0.1$ to $1$. For each task, we learn a task-wise state $\vv_g$ from $n\in\{1,16,128,\mathrm{all}\}$ rewarded queries, each with 10 rewards, and evaluate it on the full query set. As shown in Table~\ref{tab:alpha_ablation}, the evidence-adaptive $\alpha$ achieves the best average performance across the four reward budgets, avoiding the need for budget-specific tuning.

\begin{table}[t]
    \centering
    \scriptsize
    \setlength{\tabcolsep}{2.6pt}
    \caption{\textbf{Evidence-adaptive dosing avoids per-budget label
    tuning.} Entries are task-wise $\Delta\mathrm{nDCG@10}\times100$,
    averaged over five embedding models and 15 tasks. Oracle regret is the average
    gap over the four displayed evidence levels to the best dose in the
    complete sweep at each level (lower is better). Bold and underlining mark
    the best and second-best reported values in each column, respectively,
    determined before rounding; tied values share the same rank.}
    \label{tab:alpha_ablation}
    \begin{tabular}{lrrrrr}
        \toprule
        $\alpha$ & $n{=}1$ & $n{=}16$ & $n{=}128$ & $n{=}\mathrm{all}$
                    & Oracle regret $\downarrow$ \\
        \midrule
         $\alpha=0.1$ & 0.30 & 0.96 & 1.12 & 1.17 & 3.30 \\
         $\alpha=0.2$ & 0.53 & 1.73 & 2.14 & 2.21 & 2.53 \\
         $\alpha=0.5$ & \textbf{0.93} & \textbf{3.43}
                           & 4.43 & 4.70 & 0.81 \\
         $\alpha=1$   & \underline{0.84} & 3.22 & \underline{5.59}
                           & \textbf{6.39} & \underline{0.17} \\
        \hdashline
         $n/(n+1)$ & \textbf{0.93} & \underline{3.40} & \textbf{5.59}
                           & \underline{6.38} & \textbf{0.11} \\
        \bottomrule
    \end{tabular}
\end{table}


\subsection{Reward Model Selection}
\label{sec:reward_model_selection}

We compare reward models by using each one to score and reorder the same
frozen top-10 candidate sets. Qwen3-Reranker-4B supplies the normalized
probability of \emph{Yes} computed from its \emph{Yes}/\emph{No} logits
(Equation~\eqref{eq:reranker_reward}). Each general-purpose LLM instead
returns a 0--5 relevance score for the same query--document pair. Appendix provides the details.

    \begin{table}[t]
    \centering
    \scriptsize
    \setlength{\tabcolsep}{5pt}
    \caption{\textbf{In-domain reward-model selection.}
    Every configuration scores and reorders the same 307,319 MTEB
    query--document pairs. Scores and changes from raw retrieval are
    average nDCG@10 $\times100$ over five embedding models and 15 tasks.}
    \label{tab:reward_model_quality}
    \begin{tabular}{@{}lrr@{}}
        \toprule
        Reward model & Average & $\Delta$ raw \\
        \midrule
        None (raw retrieval) & 53.84 & --- \\
        \midrule
        Qwen3-Reranker-4B & 57.09 & +3.25 \\
        Gemini 3.1 Pro & 56.57 & +2.73 \\
        Gemini 3.5 Flash & 55.94 & +2.10 \\
        \midrule
        {\shortstack[l]{Qwen3-VL-30B-A3B-Instruct}}
            & 50.52 & $-3.32$ \\
        Qwen3-8B-Instruct & 48.97 & $-4.87$ \\
        \bottomrule
    \end{tabular}
\end{table}

Table~\ref{tab:reward_model_quality} shows that reward-model specialization matters. The retrieval-specific Qwen3-Reranker-4B performs best, improving raw retrieval by 3.25 points and outperforming the strongest general-purpose judge by 0.52 points. Nevertheless, capable general-purpose LLMs can provide useful ranking rewards: Gemini 3.1 Pro and Gemini 3.5 Flash improve raw retrieval by 2.73 and 2.10 points, respectively. They therefore provide practical alternatives when a specialized reranker is unavailable. In contrast, the two general-purpose Qwen models underperform raw retrieval, demonstrating that not every off-the-shelf LLM is an effective reward model.


\subsection{Compatibility with Different Learning Objectives}
\label{sec:loss_ablation}

\textsc{TTT-Embed} is compatible with other test-time learning objectives. We compare the knowledge-distillation objective in Equation~\eqref{eq:vector_objective} with a reverse-KL variant and a GRPO-based objective. Softmax-GRPO adapts Retrieval-GRPO \citep{liu2025taosearchemb}: the ten candidate documents form a group, their embedding similarities define the policy, and standardized reward-model scores serve as document-level advantages.

All objectives use the same Reranker-4B rewards, candidate sets, zero initialization, ridge regularization, and 300 Adam steps. Each of the $N$ queries receives $b=10$ reward scores. Table~\ref{tab:loss_ablation} compares the objectives across three sharing scopes and five embedding models at this budget, while Appendix reports the complete nested-budget sweep.

\begin{table}[t]
    \centering
    \small
    \caption{\textbf{Objective ablation at $b=10$ (full depth-10 coverage).}
    Entries are $\Delta\mathrm{nDCG@10}\times100$ averaged over five
    embedding models. Knowledge distillation leads at both shared scopes.}
    \label{tab:loss_ablation}

    \resizebox{0.7\columnwidth}{!}{%
    \begin{tabular}{@{}lrrr@{}}
        \toprule
        Objective & Global & Task & Query \\
        \midrule
        Knowledge Distillation & \textbf{4.00} & \textbf{6.38} & 5.26 \\
        Reverse KL variant & 3.72 & 4.66 & \textbf{5.30} \\
        Softmax-GRPO & 2.07 & 3.13 & 3.27 \\
        \bottomrule
    \end{tabular}%
    }
\end{table}

As shown in Table~\ref{tab:loss_ablation}, knowledge distillation achieves the strongest performance on average. The other objectives also yield positive gains, demonstrating that \textsc{TTT-Embed} is compatible with and robust across different test-time learning objectives. We therefore use knowledge distillation as the default objective for the shared-state setting central to this work. The appendix formalizes its relationship to reverse KL and Softmax-GRPO and presents the details.

\section{Conclusions}
We presented \textsc{TTT-Embed}, a framework that treats ranking rewards as a reusable test-time state by distilling it into a lightweight vector added to frozen query embeddings. This approach requires no access to model weights or modifications to document indexes. We demonstrate that vectors learned from a subset of queries not only improve overall performance and generalize to unseen queries and tasks, but also successfully recover from catastrophic forgetting during test time. Furthermore, the reward budget governs the optimal sharing scope. Future work will extend this framework to multimodal
domains, develop more strategic scope-selection mechanisms for deployment,
and introduce dynamic adaptation based on query complexity.

\section{Acknowledgment}

We extend our sincere gratitude to Mingyuan Zhou, Shanfeng Zhang, Mojtaba Seyedhosseini, Howard Zhou, and Tom Duerig for providing essential resources, alongside their invaluable guidance and support throughout this project. Additionally, we thank Tanmaya Dabral for reviewing the manuscript and offering insightful technical feedback.

\bibliography{aaai2027}

\clearpage
\newpage

\section{Related Work}
\label{app:related_work}

\paragraph{Dense retrieval and reranking.}
Bi-encoder retrievers embed queries and documents independently, enabling document representations to be indexed before queries arrive \citep{karpukhin2020dense, reimers2019sentence, xiong2020approximate}. Cross-encoder rerankers instead score a query and document jointly, usually improving precision at the cost of one expensive inference per candidate pair \citep{nogueira2019passage}. Hard-negative training improves bi-encoders, while listwise objectives improve rerankers \citep{xiong2020approximate, zhuang2022rankt5}; both rely on offline parameter updates. Our focus is complementary: retaining ranking feedback after deployment while leaving both the encoder and document index fixed.

\paragraph{Budgeted ranking feedback and inference.}
Resource constraints have been studied around retrieval, although not as persistent state in an embedding space. Active learning-to-rank selects which queries to annotate under a finite labeling budget \citep{bilgic2012active,wang2024simple}; online learning-to-rank updates a ranker from restricted top-$k$ feedback, and counterfactual methods learn from logged interactions even when queries do not repeat \citep{chaudhuri2016online,joachims2017unbiased}. Neural reranking work reduces the per-pair cost of cross-encoder feedback \citep{hui2022ed2lm}, while API routing allocates expensive model calls selectively across queries \citep{chen2023frugalgpt}. These lines of work motivate a workload-level feedback budget, but solve a different problem: they update a ranker or routing policy, or spend an expensive prediction on the current query. They do not retain relevance feedback as external state in the returned embedding space, compare global-wise, task-wise, and query-wise persistence, or preserve a potentially closed encoder and its existing index. We study this missing budgeted test-time regime for embedding retrieval.

\paragraph{Pseudo-relevance feedback and query adaptation.}
Pseudo-relevance feedback traditionally uses initially retrieved documents to refine a query. In neural dense retrieval, ANCE-PRF trains a separate query encoder that consumes the query and its top-ranked documents, sharing what is learned through the encoder parameters while keeping the document index fixed \citep{yu2021improving}. More closely related to our setting, TOUR directly optimizes an instance-level query representation from cross-encoder pseudo-labels \citep{sung2023optimizing}, and ReFIT distills a reranker's candidate scores into the test query vector before a second retrieval pass \citep{reddy2023refit}. A subsequent large-scale study evaluates this query-level distillation across domains, languages, modalities, update counts, and candidate depths \citep{gangi2025large}. GQR similarly refines a test-query embedding, using scores supplied by a complementary retriever in multimodal retrieval \citep{uzan2025guided}. These methods establish that external scores can improve a query representation at inference time, and some characterize how much compute to spend on that query, but their optimized state remains query-private. We instead study whether output-space feedback can persist across queries, at which scope it should persist, and how that decision changes under a global feedback budget.

\paragraph{Test-time training and inference-time compute.}
Test-time training adapts a deployed model using unlabeled test inputs, commonly by updating weights with a self-supervised objective \citep{sun2020test, liu2021ttt++}. A parallel line of work allocates additional inference-time computation to individual language-model inputs \citep{snell2024scaling}. Our setting differs in both the adaptation signal and the state being retained: the scarce resource is an external relevance judgment, and the adapted state is an additive vector outside the frozen embedding model. This makes test-time adaptation possible even with a closed-source model, provided that it returns embeddings and the user controls downstream vector search.

\section{Ranking Equivalence of Vector Addition}
\label{app:ranking_equivalence}

We formalize the ranking equivalence stated in the method section. The result concerns the exact score-induced
ordering; approximate search and candidate truncation are discussed
afterward.

\begin{proposition}[Additive-vector ranking equivalence]
Let $\xv,\vv\in\bR^p$, let $\alpha\in\bR$, and assume $\xv+\alpha\vv\neq\mathbf{0}$. Let every document be represented by a unit-normalized vector $\zv_d\in\bR^p$, and score documents by inner product with the normalized corrected query
\begin{equation}
    \widetilde{\xv}
    =
    \unitnorm{\xv+\alpha\vv}.
\end{equation}
Define the additive score
\begin{equation}
    S_{\mathrm{add}}(d)
    =
    \xv^{\top}\zv_d+\alpha\vv^{\top}\zv_d.
\end{equation}
Then, for any two documents $d_a$ and $d_b$,
\begin{equation}
    \widetilde{\xv}^{\top}\zv_{d_a}
    \geq
    \widetilde{\xv}^{\top}\zv_{d_b}
    \quad\Longleftrightarrow\quad
    S_{\mathrm{add}}(d_a) \geq S_{\mathrm{add}}(d_b).
    \label{eq:ranking_equivalence}
\end{equation}
Consequently, the two scoring rules induce the same pairwise order and the same ties.
\end{proposition}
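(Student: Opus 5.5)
The plan is to exhibit the normalized score as a positive rescaling of the additive score and then use monotonicity of multiplication by a positive constant. Set $c\coloneqq \norm{\xv+\alpha\vv}$. By the hypothesis $\xv+\alpha\vv\neq\mathbf{0}$, we have $c>0$, so $\widetilde{\xv}=(\xv+\alpha\vv)/c$ is well defined. Note that $c$ depends only on the query-side quantities $\xv$, $\vv$ and $\alpha$. It does not depend on the document.

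The first step is a direct computation using linearity of the inner product. For any document $d$,
\begin{equation*}
    \widetilde{\xv}^{\top}\zv_d
    = \frac{1}{c}\left(\xv^{\top}\zv_d+\alpha\vv^{\top}\zv_d\right)
    = \frac{1}{c}\,S_{\mathrm{add}}(d).
\end{equation*}
The second step applies this identity to $d_a$ and $d_b$ and subtracts:
\begin{equation*}
    \widetilde{\xv}^{\top}\zv_{d_a}-\widetilde{\xv}^{\top}\zv_{d_b}
    = \frac{1}{c}\bigl(S_{\mathrm{add}}(d_a)-S_{\mathrm{add}}(d_b)\bigr).
\end{equation*}
Since $1/c>0$, the two differences have the same sign: positive, zero, or negative. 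This gives Equation~\eqref{eq:ranking_equivalence} directly. The equality case shows that ties coincide, and the strict case shows that strict orderings coincide, so the two scoring rules induce the same pairwise order.

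There is no genuine obstacle. The only points to state explicitly are that $c$ is strictly positive, which is exactly why the nonvanishing assumption is needed, and that $c$ is common to all documents. The unit normalization of the $\zv_d$ is not actually used in the argument; I will remark on this so it is clear the result holds for any fixed document vectors.

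I will close with two brief remarks. First, the equivalence concerns exact scoring: under approximate search or candidate truncation, the retrieved set may differ, but the induced order on any common set of scored documents is the same. Second, because $\alpha$ may be any real number, the case $\alpha\in[0,1]$ used in the method section is covered.
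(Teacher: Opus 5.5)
Your proof is correct and matches the paper's argument essentially line for line. You set $c=\norm{\xv+\alpha\vv}>0$, write the score difference as $\bigl(S_{\mathrm{add}}(d_a)-S_{\mathrm{add}}(d_b)\bigr)/c$, and conclude from the positivity of $1/c$ that signs, and hence strict orders and ties, are preserved; your side remark that the unit normalization of $\zv_d$ is never used is also accurate.
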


\begin{proof}
Define $c=\norm{\xv+\alpha\vv}$. The nonzero-query assumption gives $c>0$. The score difference between any two documents is
\begin{align}
    \widetilde{\xv}^{\top}\zv_{d_a}
    -
    \widetilde{\xv}^{\top}\zv_{d_b}
    &=
    \frac{(\xv+\alpha\vv)^{\top}
    (\zv_{d_a}-\zv_{d_b})}{c} \\
    &=
    \frac{S_{\mathrm{add}}(d_a)-S_{\mathrm{add}}(d_b)}{c}.
\end{align}
Multiplication by the positive constant $1/c$ preserves whether this difference is positive, zero, or negative. This proves Equation~\eqref{eq:ranking_equivalence} and therefore preserves every pairwise ordering and tie.
\end{proof}

\paragraph{Scope and implementation limits.}
The proposition establishes equality of rankings, not equality of absolute scores: normalization rescales every score for a fixed query by $1/c$. For a shared scope $g$, the term $b_g(d)=\vv_g^{\top}\zv_d$ is independent of the query and may be cached as a per-document bias. However, adding this bias only to documents in the \emph{original} top-$k$ list need not recover the exact global top-$k$, because $\vv_g$ can promote a document that the base query did not retrieve. To target the global corrected ranking, one should instead issue a new search over the existing inner-product or cosine ANN index using $\widetilde{\xv}_q$ as the query vector. The indexed document vectors, vector dimensionality, and similarity function remain unchanged, although an approximate ANN algorithm may naturally return an approximation to the exact score-induced ranking.

\section{Scope-Balanced Query Weights}
\label{app:query_weights}

We now specify the weights in Equation~\eqref{eq:vector_objective}. Let $t(q)$ denote the task of query $q$, and define the feedback queries from task $t$ within scope $g$ as
\begin{equation}
    \cF_{g,t}
    =
    \{q\in\cF_g\mid t(q)=t\}.
\end{equation}
Let $\cT_g^+=\{t:\abs{\cF_{g,t}}>0\}$ be the tasks that contribute at least one feedback query to state $g$. We assign
\begin{equation}
    w_q
    =
    \frac{1}
    {\abs{\cT_g^+}\,\abs{\cF_{g,t(q)}}}.
    \label{eq:query_weight}
\end{equation}
These weights are normalized because
\begin{equation}
    \sum_{q\in\cF_g}w_q
    =
    \frac{1}{\abs{\cT_g^+}}
    \sum_{t\in\cT_g^+}
    \frac{1}{\abs{\cF_{g,t}}}
    \sum_{q\in\cF_{g,t}}1
    =1.
\end{equation}
Substituting Equation~\eqref{eq:query_weight} into Equation~\eqref{eq:vector_objective} yields the equivalent nested average
\begin{equation}
    \cL_g(\vv)
    =
    \frac{1}{\abs{\cT_g^+}}
    \sum_{t\in\cT_g^+}
    \frac{1}{\abs{\cF_{g,t}}}
    \sum_{q\in\cF_{g,t}}
    \kl\!\left(
        p_q^{\mathrm{T}}\,\middle\|\,p_q^{\mathrm{S}}(\vv)
    \right)
    +\lambda\norm{\vv}^2.
    \label{eq:nested_scope_objective}
\end{equation}
For a global-wise state, Equation~\eqref{eq:nested_scope_objective} first averages within each active task and then across tasks, matching evaluation that averages tasks equally and preventing a large task from dominating $\vv_g$. For a task-wise state, $\abs{\cT_g^+}=1$, so $w_q=1/\abs{\cF_g}$. For a query-wise state, $\abs{\cT_g^+}=\abs{\cF_g}=1$, so $w_q=1$. Finally, when every active task contributes the same number $n$ of queries, $\abs{\cF_g}=\abs{\cT_g^+}n$ and Equation~\eqref{eq:query_weight} also reduces to the flat query weight $w_q=1/\abs{\cF_g}$. The distinction matters only when tasks contribute different numbers of feedback queries.

\section{Bayesian Interpretation of Deployment Shrinkage}
\label{app:bayesian_shrinkage}

We generalize Proposition~\ref{prop:bayesian_shrinkage} to unequal prior and
per-query noise scales. This surrogate formalizes the uncertainty of a
vector estimate; it is not an additional assumption used to optimize
Equation~\eqref{eq:vector_objective}.

\begin{proposition}[Bayesian vector shrinkage]
Let $\uv_g\in\bR^p$ be a latent transferable vector and let $\boldsymbol{\Sigma}$ be positive definite. Suppose
\begin{equation}
    \uv_g \sim \mathcal{N}(\mathbf{0},\sigma_0^2\boldsymbol{\Sigma}),
    \qquad
    \vv_g^*\mid \uv_g,n_g
    \sim
    \mathcal{N}\!\left(\uv_g,\frac{\sigma^2}{n_g}\boldsymbol{\Sigma}\right),
    \label{eq:bayesian_vector_model}
\end{equation}
for $n_g\geq 1$. Then the posterior mean is
\begin{equation}
    \mathbb{E}[\uv_g\mid\vv_g^*,n_g]
    =
    \frac{n_g}{n_g+\kappa}\vv_g^*,
    \qquad
    \kappa=\frac{\sigma^2}{\sigma_0^2}.
    \label{eq:bayesian_vector_mean}
\end{equation}
In particular, equal prior and per-query noise scales give $\kappa=1$ and recover Equation~\eqref{eq:alpha}.
\end{proposition}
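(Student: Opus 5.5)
The plan is a standard Gaussian conjugacy computation carried out in precision form, mirroring the proof of Proposition~\ref{prop:bayesian_shrinkage}. I would write the unnormalized posterior density of $\uv_g$ given $\vv_g^*$ and $n_g$ as the product of the prior density and the likelihood. Up to terms not depending on $\uv_g$, its logarithm is
\begin{equation*}
-\tfrac12\,\uv_g^\top\!\Big(\tfrac{1}{\sigma_0^2}+\tfrac{n_g}{\sigma^2}\Big)\boldsymbol{\Sigma}^{-1}\uv_g
+\uv_g^\top\tfrac{n_g}{\sigma^2}\boldsymbol{\Sigma}^{-1}\vv_g^* .
\end{equation*}
Positive definiteness of $\boldsymbol{\Sigma}$ guarantees that $\boldsymbol{\Sigma}^{-1}$ exists and that this quadratic form is strictly concave. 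The posterior is therefore Gaussian with precision $\boldsymbol{\Lambda}=(1/\sigma_0^2+n_g/\sigma^2)\boldsymbol{\Sigma}^{-1}$ and information vector $\boldsymbol{\eta}=(n_g/\sigma^2)\boldsymbol{\Sigma}^{-1}\vv_g^*$.

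Next I would read off the posterior mean as $\boldsymbol{\Lambda}^{-1}\boldsymbol{\eta}$. Both $\boldsymbol{\Lambda}$ and $\boldsymbol{\eta}$ are scalar multiples of $\boldsymbol{\Sigma}^{-1}$. Hence the matrix factors cancel exactly, giving the scalar
\begin{equation*}
\frac{n_g/\sigma^2}{1/\sigma_0^2+n_g/\sigma^2}.
\end{equation*}
Multiplying numerator and denominator by $\sigma^2$ yields $n_g/(n_g+\kappa)$ with $\kappa=\sigma^2/\sigma_0^2$.

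I would add two remarks. First, under squared-error loss the Bayes estimator is the posterior mean, which justifies the phrase ``Bayes estimator'' in the companion statement. Second, setting $\sigma_0=\sigma$ gives $\kappa=1$, which recovers Equation~\eqref{eq:alpha} and Proposition~\ref{prop:bayesian_shrinkage}.

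The only delicate point is the cancellation of the matrices. It relies on the prior and the noise sharing the same shape $\boldsymbol{\Sigma}$. I would state explicitly that this shared covariance structure is what makes the shrinkage isotropic, i.e.\ a scalar multiple of $\vv_g^*$.
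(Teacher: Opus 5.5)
Your proposal is correct and follows the paper's proof essentially verbatim. Both use Gaussian conjugacy in precision form: the posterior precision is $(1/\sigma_0^2+n_g/\sigma^2)\boldsymbol{\Sigma}^{-1}$, its inverse is applied to the information vector $(n_g/\sigma^2)\boldsymbol{\Sigma}^{-1}\vv_g^*$, and the shared $\boldsymbol{\Sigma}$ cancels to leave the scalar $n_g/(n_g+\kappa)$, exactly as the paper notes in its remark on the shared covariance shape.
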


\begin{proof}
Gaussian conjugacy gives posterior precision
\begin{equation}
    \left(\frac{1}{\sigma_0^2}+\frac{n_g}{\sigma^2}\right)
    \boldsymbol{\Sigma}^{-1}.
\end{equation}
Multiplying its inverse by the likelihood precision times $\vv_g^*$ yields
\begin{align}
    \mathbb{E}[\uv_g\mid\vv_g^*,n_g]
    &=
    \left(\frac{1}{\sigma_0^2}+\frac{n_g}{\sigma^2}\right)^{-1}
    \frac{n_g}{\sigma^2}\vv_g^* \\
    &=
    \frac{n_g\sigma_0^2}{\sigma^2+n_g\sigma_0^2}\vv_g^*
    =
    \frac{n_g}{n_g+\sigma^2/\sigma_0^2}\vv_g^*.
\end{align}
\end{proof}

The $1/n_g$ variance in Equation~\eqref{eq:bayesian_vector_model} treats each feedback query as one independent unit of evidence about transfer. It therefore explains why $n_g$ counts queries rather than the number of documents scored within a query. Giving the zero-centered prior the precision of one feedback query sets $\kappa=1$---equivalently, it adds one zero-valued pseudo-query---and produces the parameter-free rule in Equation~\eqref{eq:alpha}. Defining $\alpha_g=0$ at $n_g=0$ recovers the no-adaptation state. The shared covariance shape in Equation~\eqref{eq:bayesian_vector_model} is what makes the posterior shrinkage scalar rather than dimension-specific.

\section{Additional Experimental Details}
\label{app:experimental_details}

\paragraph{Embedding and retrieval protocol.}
All query and document outputs are $\ell_2$ normalized, and retrieval uses inner
product, equivalently cosine similarity. We initialize
every $\vv_g$ at zero. After learning, evaluation issues a new search with the
adapted query against the complete, unchanged document index; it is not
restricted to reordering the rewarded candidates.

\paragraph{Retrieval tasks.}
The 15-task suite consists of ArguAna, CQADupstackGamingRetrieval,
CQADupstackUnixRetrieval, ClimateFEVERHardNegatives,
DBPediaHardNegatives, FEVERHardNegatives, FiQA2018,
HotpotQAHardNegatives, MSMARCOHardNegatives, NQHardNegatives,
QuoraRetrievalHardNegatives, SCIDOCS, TRECCOVID,
TopiOCQAHardNegatives, and Touche2020Retrieval.v3. We use the frozen test
query identifiers and task revisions recorded in the experiment manifest.
Across these tasks, the evaluation population contains $N=12{,}263$ queries
($43$--$1{,}595$ per task). The task-specific corpora contain
$1{,}661{,}208$ entries in aggregate and range from $8{,}674$ to $303{,}732$
entries.

\begin{table}[t]
    \centering
    \scriptsize
    \setlength{\tabcolsep}{4pt}
    \caption{Evaluation-set size for each MTEB retrieval task. ``HN''
    abbreviates the HardNegatives revision. The final row is the pooled count;
    reported performance nevertheless averages the 15 tasks equally.}
    \label{tab:task_statistics}
    \begin{tabular}{@{}lrr@{}}
        \toprule
        Task & Queries & Corpus entries \\
        \midrule
        ArguAna            & 1,406 &   8,674 \\
        CQA Gaming         & 1,595 &  45,301 \\
        CQA Unix           & 1,072 &  47,382 \\
        ClimateFEVER-HN    & 1,000 &  47,416 \\
        DBPedia-HN         &   400 &  90,070 \\
        FEVER-HN           & 1,000 & 163,698 \\
        FiQA2018           &   648 &  57,638 \\
        HotpotQA-HN        & 1,000 & 225,621 \\
        MSMARCO-HN         &    43 &   8,812 \\
        NQ-HN              & 1,000 & 198,779 \\
        Quora-HN           & 1,000 & 177,163 \\
        SCIDOCS            & 1,000 &  25,657 \\
        TRECCOVID          &    50 & 171,332 \\
        TopiOCQA-HN        & 1,000 &  89,933 \\
        Touche2020 v3      &    49 & 303,732 \\
        \midrule
        Total              & 12,263 & 1,661,208 \\
        \bottomrule
    \end{tabular}
\end{table}

\paragraph{Reward-model scoring interfaces.}
\label{app:reward_interfaces}

Every reward model scores the same frozen query--document pairs independently.
All models receive the same generic retrieval instruction,
``Given a web search query, retrieve relevant passages that answer the
query,''. 

For Qwen3-Reranker-4B, the system message asks whether the document meets the
requirements of the query and instruction and restricts the response to
\emph{yes} or \emph{no}. Following the model's native reranking template, we
append an empty thinking block and take the raw next-token logits without
sampling text. Its reward is the normalized \emph{Yes} probability in
Equation~\eqref{eq:reranker_reward}. Inputs are truncated to the model's
8,192-token limit.

The four general-purpose LLMs receive the following shared relevance rubric:
$5$, perfectly relevant and directly and completely satisfies the query;
$4$, highly relevant but incomplete in detail or coverage; $3$, partially
relevant, providing a partial or indirect answer that requires inference;
$2$, marginally relevant and on topic but not answering the query; $1$,
barely related through background or terminology; and $0$, unrelated. We
convert the decoded final grade to
\begin{equation}
    r(q,d)=\frac{s(q,d)}{5}, \qquad s(q,d)\in\{0,1,\ldots,5\}.
    \label{eq:rubric_reward}
\end{equation}
Qwen3-8B-Instruct and Qwen3-VL-30B-A3B-Instruct run with model thinking
disabled and are instructed to reason briefly before ending with exactly
\texttt{Score: X}. We decode that terminal digit from one sample generated
with the official non-thinking configuration (temperature $0.7$, top-$p$
$0.8$, top-$k$ $20$), a 512-token limit, and seed 1234. Gemini 3.1 Pro is queried with
\texttt{thinkingLevel=low}; its internal thought content is discarded and
only the decoded final digit is retained. Because
Table~\ref{tab:reward_model_quality} uses each reward model only to order its
own scored candidate set, dividing the grade by five does not change the
reported ranking.

\paragraph{Candidate and budget construction.}
The allocation rule is an evaluation protocol, not a component of
\textsc{TTT-Embed}; the method accepts any externally selected reward set.
We use a task-balanced water-filling schedule. Within task $t$, queries follow
a permutation generated with seed $1000+t$, where $t$ is the task's
position in the fixed 15-task list. Water filling cycles through the active
tasks and adds one query from each task in turn; exhausted tasks drop out.
Within a query, candidates follow its own base-retrieval order. This produces
nested prefixes, so increasing a budget never removes a previously rewarded
query--document pair.

The main depth-10 breadth grid uses
\begin{equation}
\begin{aligned}
    m\in\{&1,2,4,8,15,30,60,120,\\
           &240,480,960,1920,3840,7680,12263\},
\end{aligned}
\end{equation}
where $m$ is the total number of rewarded queries and each contributes its
first ten candidates, hence $B=10m$ and $b=10m/N$. After all queries reach
depth 10, the curve adds full-coverage depths 30 and 100. Consequently, all
scopes and baselines see exactly the same prefix of query--document pairs at a
given budget. Every selected candidate has a real reward-model score; no
missing value is masked or imputed.

For the confidence factor, $n_g$ is the number of distinct rewarded queries
used to learn a vector: $n_g=m$ for the global vector, the task-specific
rewarded-query count for each task vector, and $n_g=1$ for every query-wise
vector. A scope with no rewarded query has no learned vector and applies zero
correction. For direct reranking, rewarded candidates are stably sorted by
reward within their original slots; unrewarded slots and queries remain in
base-retrieval order. The main curves and objective ablation evaluate the
fixed population of all $N=12{,}263$ queries. The held-out transfer
experiments instead use disjoint reward and evaluation queries.

For readability, Table~\ref{tab:scope_peak} denotes the low-budget
$m=15$, depth-10 operating point as $b=10^{-2}$; its exact normalized budget
on the frozen grid is $b=(15\times10)/12{,}263=0.0122$.

\paragraph{Area under the budget curve.}
For a matched gain curve $\Delta(b)$, we define
\begin{equation}
    \operatorname{AUBC}_{\Delta}[b_{\min},b_{\max}]
    =
    \frac{
        \int_{\log b_{\min}}^{\log b_{\max}}
        \Delta(e^u)\,\mathrm{d}u
    }{
        \log b_{\max}-\log b_{\min}
    }.
    \label{eq:aubc}
\end{equation}
We use trapezoidal integration on the frozen log-budget grid. The zero-budget
base point is excluded because $\log 0$ is undefined. AUBC@100 uses
$b_{\min}=10/N$ and $b_{\max}=100$; AUBC@8 uses the corresponding common grid
up to $b=8$.

\section{Full-Curve Budget Results}
\label{app:aubc_results}

Table~\ref{tab:aubc_full} reports the model-level summary underlying
Figure~\ref{fig:budget_scope}. Although query-private optimization attains the
best high-budget endpoint, task-wise sharing is the most robust choice when
performance is integrated over the full log-budget range.

\begin{table}[h]
    \centering
    \small
    \caption{Gain AUBC@100 ($\times100$) for the complete budget curves.
    Higher is better. Each value integrates the matched
    $\Delta\mathrm{nDCG@10}$ curve uniformly in log budget.}
    \label{tab:aubc_full}
    \resizebox{\columnwidth}{!}{%
    \begin{tabular}{lrrrr}
        \toprule
        Embedding model & Global-wise & Task-wise & Query-wise & Direct rerank \\
        \midrule
        Qwen3-Embedding-0.6B & 3.65 & \textbf{4.17} & 2.98 & 2.28 \\
        Qwen3-Embedding-4B   & 3.85 & \textbf{4.10} & 1.70 & 1.18 \\
        EmbeddingGemma-300M  & 3.84 & \textbf{5.45} & 3.42 & 2.70 \\
        Gemini Embedding 1   & 1.53 & \textbf{3.09} & 2.01 & 1.34 \\
        Gemini Embedding 2   & 0.48 & \textbf{1.69} & 1.15 & 0.71 \\
        \bottomrule
    \end{tabular}
    }
\end{table}

\section{Full Transfer Curves}
\label{app:transfer_curves}

\begin{figure*}[t]
    \centering
    \includegraphics[width=0.95\textwidth]{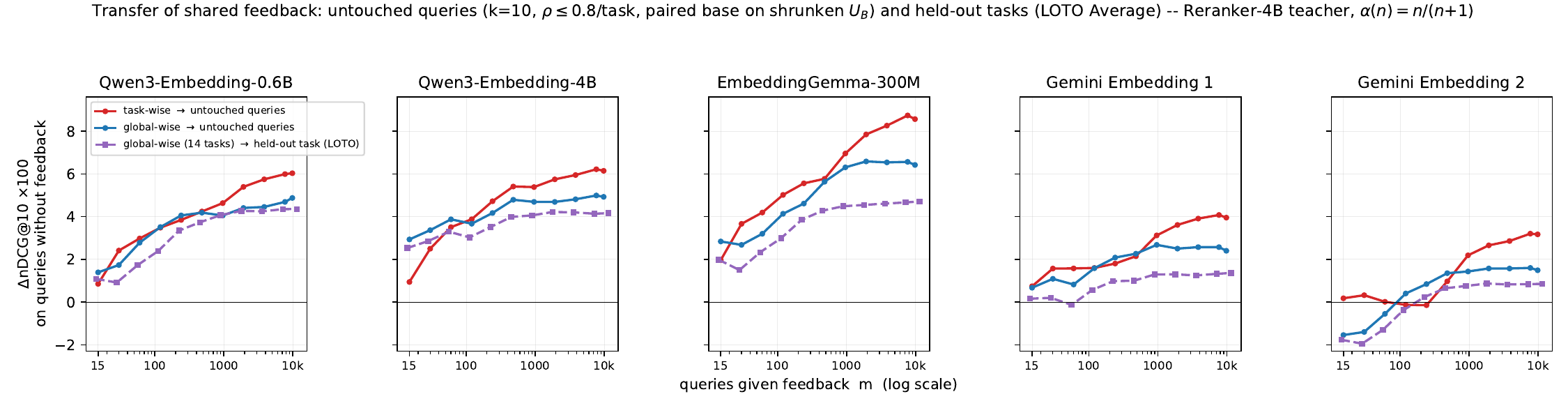}
    \caption{Transfer of shared feedback for each embedding model. Red and
    blue curves evaluate task-wise and global-wise states only on held-out
    queries, using a paired raw-query baseline on the same shrinking set.
    Purple curves learn a global-wise $\vv_g$ from 14 tasks and evaluate it on
    the held-out fifteenth task. All curves use Reranker-4B feedback and the fixed
    dose $\alpha(n)=n/(n+1)$.}
    \label{fig:transfer_full}
\end{figure*}

\section{Same-Budget Specialization Recovery Results}
\label{app:sft_recovery}

Table~\ref{tab:sft_recovery_full} separates the specialization tradeoff from
the subsequent recovery experiment. The recovery states in panel (b) are
learned and activated only on the 15 MTEB tasks; the SKILLRET score in panel
(a) therefore remains the specialized model's raw retrieval score. If a
state were learned on SKILLRET alone, global-wise and task-wise sharing would
coincide because SKILLRET contains a single retrieval task. They differ in
panel (b) because global-wise recovery pools feedback across all 15 MTEB
tasks, whereas task-wise recovery learns one state per MTEB task.

\begin{table}[h]
    \centering
    \small
    \setlength{\tabcolsep}{5pt}
    \caption{\textbf{Complete matched-budget recovery comparison.}
    Scores are nDCG@10 $\times100$. Panel (a) evaluates embedding-only
    retrieval before test-time feedback. Panel (b) freezes
    SKILLRET-SFT-0.6B and gives every method the same $b=10$ Reranker-4B
    feedback pairs; gains are relative to its 45.42 MTEB score.}
    \label{tab:sft_recovery_full}

    \resizebox{\columnwidth}{!}{%
    \begin{tabular}{@{}lrr@{}}
        \toprule
        \multicolumn{3}{@{}l}{\textit{(a) Specialization tradeoff}} \\
        Model & SKILLRET & MTEB Average \\
        \midrule
        Qwen3-Embedding-0.6B & 52.24 & 49.90 \\
        SKILLRET-SFT-0.6B    & 66.21 & 45.42 \\
        \addlinespace[4pt]
        \multicolumn{3}{@{}l}{\textit{(b) MTEB recovery on
        SKILLRET-SFT-0.6B}} \\
        Deployment state & MTEB Average & Gain \\
        \midrule
        Global-wise \textsc{TTT-Embed} & 50.38 & +4.97 \\
        Task-wise \textsc{TTT-Embed}   & 53.42 & +8.00 \\
        Query-wise \textsc{TTT-Embed}  & 53.83 & +8.42 \\
        Direct rerank                  & 51.58 & +6.16 \\
        \bottomrule
    \end{tabular}%
    }
\end{table}

\section{Connections Between Distillation, Reverse KL, and Softmax-GRPO}
\label{app:objective_analysis}

We analyze the data-fitting terms below; all three reported objectives add
the same ridge penalty on $\vv$. For one query, omit the query index and let
\begin{equation}
    p_i
    =
    \frac{\exp(r_i/\tau_{\mathrm T})}
         {\sum_{j=1}^{K}\exp(r_j/\tau_{\mathrm T})},
    \qquad
    \pi_i
    =
    \frac{\exp(\ell_i)}
         {\sum_{j=1}^{K}\exp(\ell_j)}
    \label{eq:app_teacher_student}
\end{equation}
denote the reward-model teacher distribution and the policy induced by the
corrected query embedding, respectively. Here $\ell_i$ includes the
student-side temperature, and $K=10$ in the objective ablation.

\paragraph{Knowledge distillation.}
The objective used by \textsc{TTT-Embed} is
\begin{align}
    \mathcal L_{\mathrm{KD}}
    &=
    \kl(p\mid\mid\pi) \\
    &=
    -\sum_{i=1}^{K}p_i\log\pi_i-H(p).
    \label{eq:app_kd}
\end{align}
Because the teacher entropy $H(p)$ is constant, this is precisely
soft-target knowledge distillation \citep{hinton2015distilling}, instantiated
listwise over retrieved documents as in listwise learning to rank
\citep{cao2007learning}. Its gradient with respect to each student logit is
\begin{equation}
    \frac{\partial\mathcal L_{\mathrm{KD}}}{\partial \ell_i}
    =
    \pi_i-p_i.
    \label{eq:app_kd_gradient}
\end{equation}
Thus every scored document contributes a deterministic, bounded correction,
and the teacher distribution is the exact target whenever it is representable
by the $\vv$ parameterization.

\paragraph{Reverse KL as entropy-regularized reward maximization.}
Substituting $p_i=\exp(r_i/\tau_{\mathrm T})/Z_{\mathrm T}$ gives
\begin{align}
    \mathcal L_{\mathrm{RKL}}
    &=
    \kl(\pi\mid\mid p) \\
    &=
    -H(\pi)
    -\frac{1}{\tau_{\mathrm T}}\bbE_{i\sim\pi}[r_i]
    +\log Z_{\mathrm T}.
    \label{eq:app_rkl_reward}
\end{align}
Minimizing reverse KL is therefore exactly equivalent to maximizing
$\bbE_{\pi}[r_i]+\tau_{\mathrm T}H(\pi)$. In this sense, reverse KL is
closer to entropy-regularized policy optimization than to teacher-weighted
distillation. Its logit gradient,
\begin{equation}
    \frac{\partial\mathcal L_{\mathrm{RKL}}}{\partial \ell_i}
    =
    \pi_i
    \left[
        \log\frac{\pi_i}{p_i}
        -\kl(\pi\mid\mid p)
    \right],
    \label{eq:app_rkl_gradient}
\end{equation}
weights discrepancies by the student's current probability. It can therefore
concentrate updates on modes already favored by the student, whereas
Equation~\eqref{eq:app_kd_gradient} directly corrects every teacher--student
probability mismatch.

\paragraph{Connection to Softmax-GRPO.}
Our budget-controlled adaptation of Retrieval-GRPO
\citep{liu2025taosearchemb} treats the $K$ documents as actions. Let
$\pi_i^0$ be the policy induced by the unmodified query, define
\begin{equation}
    A_i=\frac{r_i-\overline r}{\sigma_r+\delta},
    \qquad
    \rho_i=\frac{\pi_i}{\pi_i^0},
\end{equation}
and note that $\sum_i A_i=0$. The clipped document-level objective is
\begin{equation}
    \mathcal L_{\mathrm{SG}}
    =
    -\frac{1}{K}\sum_{i=1}^{K}
    \min\!\left\{
        \rho_i A_i,\,
        \operatorname{clip}(\rho_i,1-\epsilon,1+\epsilon)A_i
    \right\}.
    \label{eq:app_softmax_grpo}
\end{equation}
Unlike offline Retrieval-GRPO, we freeze the encoder and candidate set,
optimize only $\vv$, and use the native query policy as the fixed reference.
At initialization, $\pi=\pi^0$, the ratio is one and clipping is inactive.
Differentiating Equation~\eqref{eq:app_softmax_grpo} yields
\begin{equation}
    \left.
    \frac{\partial\mathcal L_{\mathrm{SG}}}{\partial \ell_i}
    \right|_{\pi=\pi^0}
    =
    -\frac{A_i}{K}.
    \label{eq:app_softmax_grpo_gradient}
\end{equation}
This exposes a local connection to knowledge distillation. If the reference
policy is approximately uniform and the reward spread is small relative to
$\tau_{\mathrm T}$, then
\begin{equation}
    p_i
    =
    \frac{1}{K}
    +
    \frac{r_i-\overline r}{K\tau_{\mathrm T}}
    +
    O\!\left(
        \frac{\norm{r-\overline r\mathbf 1}^{2}}{\tau_{\mathrm T}^{2}}
    \right).
    \label{eq:app_teacher_local}
\end{equation}
Equations~\eqref{eq:app_kd_gradient} and
\eqref{eq:app_softmax_grpo_gradient} are then proportional to the same
centered-reward direction. Away from this local regime, however, knowledge
distillation retains the calibrated teacher probabilities and has $p$ as its
explicit target. Softmax-GRPO standardizes away reward magnitude and performs
a clipped step relative to $\pi^0$; it need not converge to $p$. This
distinction explains how both objectives can learn useful directions while
knowledge distillation remains more consistent when one direction is pooled
across queries.

\section{Full Optimization-Objective Ablation}
\label{app:loss_full}

Table~\ref{tab:loss_full} reports the average across five models at all four
tested feedback breadths. Knowledge distillation has the strongest global-wise
average at every budget and the strongest task-wise average at
all three points with $b\geq0.196$; reverse KL is slightly better query-wise
in that regime. At the underlying model level, knowledge distillation wins 40
of 60 comparisons and all 15 corresponding task-wise cells. Reverse KL wins
16 cells and Softmax-GRPO wins four. All candidates in this ablation have
real reward-model scores; this complete support avoids conflating an
objective's behavior with an arbitrary score assigned to missing candidates.

\begin{table}[h]
    \centering
    \small
    \setlength{\tabcolsep}{3.5pt}
    \caption{Complete objective ablation averaged over five embedding models
    ($\Delta\mathrm{nDCG@10}\times100$). Each selected query has ten judged
    candidates. With $m$ feedback queries among $N=12{,}263$ total queries,
    the normalized budget is $b=10m/N$.}
    \label{tab:loss_full}
    \begin{tabular}{@{}rlrrr@{}}
        \toprule
        $b$ & Objective
            & \shortstack{Global-\\wise}
            & \shortstack{Task-\\wise}
            & \shortstack{Query-\\wise} \\
        \midrule
        0.012 & Knowledge Distillation & \textbf{1.24} & 0.93 & \textbf{0.006} \\
              & Softmax-GRPO           & 0.78 & 0.28 & 0.005 \\
              & Reverse KL             & $-2.47$ & \textbf{1.13} & 0.002 \\
        \addlinespace
        0.196 & Knowledge Distillation & \textbf{3.27} & \textbf{3.40} & 0.64 \\
              & Softmax-GRPO           & 1.76 & 1.84 & 0.41 \\
              & Reverse KL             & 2.10 & $-0.92$ & \textbf{0.68} \\
        \addlinespace
        3.13 & Knowledge Distillation & \textbf{3.99} & \textbf{6.07} & 2.92 \\
             & Softmax-GRPO           & 2.07 & 3.03 & 1.94 \\
             & Reverse KL             & 3.71 & 4.17 & \textbf{2.99} \\
        \addlinespace
        10 & Knowledge Distillation & \textbf{4.00} & \textbf{6.38} & 5.26 \\
           & Softmax-GRPO           & 2.07 & 3.13 & 3.27 \\
           & Reverse KL             & 3.72 & 4.66 & \textbf{5.30} \\
        \bottomrule
    \end{tabular}
\end{table}

\subsection{Optimization Details} 
We zero-initialize each $\vv_g$ and optimize
Equation~\eqref{eq:vector_objective} for 300 full-batch Adam steps
\citep{kingma2014adam}, using learning rate $0.01$, $\lambda=0.3$,
$\tau_{\mathrm{T}}=0.02$, and $\tau_{\mathrm{S}}=0.05$. These settings are
fixed across embedding models, tasks, scopes, and budgets. Reverse KL uses
the same teacher and student temperatures. Softmax-GRPO uses policy
temperature $0.05$, the policy at $\vv=\mathbf{0}$ as its fixed reference,
ratio clipping $\epsilon=0.2$, and an advantage-standard-deviation floor of
$10^{-4}$. All objectives use the same rewarded candidates and the same
300-step optimizer budget.

\end{document}